\def\dOi{10(3:5)2014}
\newcommand{\cal}{\mathcal}
\def\DTIME{{\sf DTIME}}
\newcommand{\N}{\mathbb{N}}
\renewcommand{\S}{\mathcal{S}}
\newcommand{\ts}{{t^{*}}}
\newcommand{\G}{\mbox{$\cal G$}}
\newcommand{\Poly}{\mbox{\rm poly}}
\newcommand{\typesetComplexityClass}[1]{\mathsf{#1}}
\newcommand{\DTTR}{\typesetComplexityClass{DTTR}}
\newcommand{\DTTRtime}{\typesetComplexityClass{TTRT}}
\newcommand{\DTTRtimeb}{\typesetComplexityClass{TTRT'}}
\newcommand{\K}{\mbox{\rm K}}
\newcommand{\PA}{\mbox{\rm PA}}
\newcommand{\NEXP}{\mbox{\rm NEXP}}
\newcommand{\EXP}{\mbox{\rm EXP}}
\newcommand{\PSPACE}{\mbox{\rm PSPACE}}
\newcommand{\BPP}{\mbox{\rm BPP}}
\newcommand{\Ppoly}{\mbox{\rm P/poly}}
\renewcommand{\P}{{\rm P}}
\newcommand{\RK}{{R_K}}
\newcommand{\RKtime}{{R_{K^t}}}
\newcommand{\Ktime}{{K^t}}
\newcommand{\RKU}{{R_{K_U}}}
\newcommand{\NP}{\mbox{\rm NP}}
\newcommand{\comment}[1]{{}}
\newcommand{\Polytime}{\mbox{\rm P}}
\newtheorem{theorem}{Theorem}
\newtheorem{corollary}[theorem]{Corollary}
\title{Reductions to the set of random strings:\\the resource-bounded case}
\author[E.~Allender]{Eric Allender\rsuper a}
\address{{\lsuper a}Department of Computer Science, Rutgers University, Piscataway, NJ 08855, USA}
\email{allender@cs.rutgers.edu}
\author[H.~Buhrman]{Harry Buhrman\rsuper b}%
\address{{\lsuper b}Centrum Wiskunde \& Informatica (CWI), and University of Amsterdam,\hfill\break Amsterdam, The Netherlands}
\email{buhrman@cwi.nl}
\author[L.~Friedman]{Luke Friedman\rsuper c}
\address{{\lsuper c}Google, 1600 Amphitheatre Parkway, Mountain View, CA 94043}
\email{lbfried@gmail.com}
\author[B.~Loff]{Bruno Loff\rsuper d}
\address{{\lsuper d}Centrum Wiskunde \& Informatica (CWI), Amsterdam, The Netherlands}
\email{bruno.loff@gmail.com}
\begin{document}

\begin{abstract}
  This paper is motivated by a conjecture \cite{cie,adfht} that $\BPP$ can be characterized in terms of polynomial-time nonadaptive reductions to the set of Kolmogorov-random strings.  In this paper we show that an approach laid out in \cite{adfht} to settle this conjecture cannot succeed without significant alteration, but that it does bear fruit if we consider time-bounded Kolmogorov complexity instead.

  We show that if a set $A$ is reducible in polynomial time to the set of time-$t$-bounded Kolmogorov-random strings (for all large enough time bounds $t$), then $A$ is in $\Ppoly$, and that if in addition such a reduction exists for any universal Turing machine one uses in the definition of Kolmogorov complexity, then $A$ is in $\PSPACE$.
\end{abstract}

\maketitle

\section{Introduction}

The roots of this investigation stretch back to the discovery that $\PSPACE \subseteq \Polytime^R$ and $\NEXP \subseteq \NP^R$, where $R$ is the set of Kolmogorov-random strings \cite{power,abk}.  Later, it was shown that $\BPP \subseteq \Polytime^R_{tt}$ \cite{bfkl}, where $\Polytime^A_{tt}$ denotes the class of problems reducible to $A$ via polynomial-time {\em nonadaptive} (or {\em truth-table}) reductions.

There is evidence indicating that some of these inclusions are in some sense optimal.  
The reader may, with some justification, be rather confused by this
claim of ``optimality.''  After all, the inclusions in question
all take the form of providing upper bounds for complexity classes,
in terms of efficient reductions to sets such as $R$ that are not even 
computable!  In what sense can these inclusions be optimal?  Let us
explain.

The inclusions mentioned in the initial paragraph hold for the two most-common versions of Kolmogorov complexity (the plain complexity $C$ and the prefix-free complexity $K$), and (significantly for our investigation) they also hold 
{\em no matter which universal Turing machine one uses} when defining the measures $K$ and $C$.

Let $\RKU$ denote the set of random strings according to the prefix-free measure $K$ given by the universal machine $U$: $\RKU = \{x : K_U(x) \geq |x|\}$.  
In a preceding paper \cite{eccc}, it was shown that the class of decidable 
sets that are polynomial-time truth-table reducible to $\RKU$ for every $U$ is contained in $\PSPACE$.  That is, although $\Polytime^{\RKU}_{tt}$ contains arbitrarily complex decidable sets, an extremely complex set can only be there because of characteristics of $\RKU$ that are fragile with respect to the choice of $U$.

This motivates the following definition: $\DTTR$ is the class of all problems 
that are polynomial-time truth-table reducible to ${\RKU}$ for every choice 
of universal prefix-free Turing machine $U$.\footnote{In the conference version
of this paper \cite{mfcspaper}, $\DTTR$ was defined as the class of all
{\em decidable} problems that are polynomial-time truth-table reducible to 
${\RKU}$ for every universal prefix machine $U$.  However, it has recently
been shown that this class remains the same, even if the restriction to 
decidable sets is removed \cite{cai.et.al}.  That is: all sets in $\DTTR$
(as defined above) are already decidable.}
Thus it was proven that
\begin{equation}
  \label{eq:dttr-inclusions}
  \BPP \subseteq \DTTR \subseteq \PSPACE \subseteq \P^\RK.
\end{equation}
So we  naturally come upon the following.

\indent\textbf{Research question:} \emph{Does $\DTTR$ sit closer to $\BPP$, or closer to $\PSPACE$?}

A conjecture by various authors \cite{adfht,cie} is that $\DTTR$ actually characterizes $\BPP$ exactly. Part of the intuition is that (seemingly) a non-adaptive reduction cannot make use of queries to $\RK$ larger than $O(\log n)$ to solve a decidable problem. If this conjecture is indeed true, then we could use the strings of length at most $O(\log n)$ as advice and answer the larger queries with NO, to show that these sets are in $\Ppoly$. The rest of the intuition is that the smaller strings can only be used as a source for pseudo-randomness.
If we are able to prove this conjecture, then we can make use of the tools of Kolmogorov complexity to study various questions about the class $\BPP$. Because of the inclusions listed in (\ref{eq:dttr-inclusions}) above, this now amounts to understanding the relative power of Turing reductions \emph{vs.} truth-table reductions to $\RK$.

In an attempt to tackle this question, it was conjectured in \cite{adfht,cie} that the $\DTTR \subseteq \PSPACE$ upper bound can be improved to $\PSPACE \cap \Ppoly$, and an approach was suggested, based on the above mentioned intuition, dealing with the provability of true statements in various formal systems of arithmetic.  In this paper, we show that this approach must fail, or at least requires significant changes.  Interestingly, we can also prove that this intuition --- that the large queries can be answered with NO --- \emph{can} be used  in the resource-bounded setting to show an analogue of the $\Ppoly$ inclusion.  While demonstrating this discrepancy we show several other ways in which reductions to $R_K$ and $R_{K^t}$ are actually very different; in particular, we construct a counter-intuitive example of a polynomial-time non-adaptive reduction that distinguishes $R_K$ from $R_{K^t}$, for any sufficiently large time-bound $t$.

To investigate the resource-bounded setting we define a class $\DTTRtime$ as 
an analog of $\DTTR$, defined using time-bounded Kolmogorov complexity (for
very large time bounds).
Informally, $\DTTRtime$ is the class of problems that are polynomial-time truth-table reducible to ${\RKtime}$ for every sufficiently fast-growing time-bound $t$, and every ``time-efficient'' universal Turing machine used to define $\Ktime$.
We prove that, for all monotone nondecreasing computable functions $\alpha(n) = \omega(1)$,
$$\BPP \subseteq \DTTRtime \subseteq \PSPACE/\alpha(n) \cap \Ppoly.$$
Here, $\PSPACE/\alpha(n)$ is a ``slightly non-uniform'' version of $\PSPACE$.
That is, we succeed in obtaining a $\Ppoly$ upper bound (of the sort that
we were unable to obtain for $\DTTR$ in \cite{adfht}), and we ``nearly''
obtain a $\PSPACE$ upper bound (analogous to the $\PSPACE$ upper bound that
was obtained for $\DTTR$ in \cite{eccc}).
We believe that this indicates that $\DTTRtime$ is ``closer'' to $\BPP$ than it is to $\PSPACE$.
(Recently, Hirahara and Kawamura have announced results of a similar
nature, stated in terms of plain Kolmogorov complexity, instead of
the prefix-free notion considered here \cite{mfcs14}.)

It would be more appealing to avoid the advice function, and we are able to do so, although this depends on a fine point in the definition of time-efficient prefix-free Kolmogorov complexity.  This point involves a subtle technical distinction, and will be left for the appropriate section. To summarize:
\begin{itemize}
\item In Section \ref{sec:small-circuits} we prove that $\DTTRtime \subseteq \Ppoly$, by using the same basic idea of \cite{adfht,cie}. We further show, however, that this approach will not work to prove $\DTTR \subseteq \Ppoly$, and by reversing the logic connection of \cite{adfht,cie}, this will give us an independence result in certain extensions of Peano arithmetic.
\item Then in Section \ref{sec:pspace-inclusion} we prove that $\DTTRtime \subseteq \PSPACE/\alpha(n)$, which is a non-trivial adaptation of the techniques from \cite{eccc}. In Section \ref{sec:no-advice} we show how to get an analogous result without the super-constant advice term.
\end{itemize}
In the final section we discuss prospects for future work.

We consider the results in Section \ref{sec:small-circuits} to be the most 
important contributions of this paper.  The $\Ppoly$ upper bound
indicates that $\DTTRtime$ is a ``feasible'' class in some sense, and
can perhaps be viewed as evidence that a similar upper bound should
also hold for $\DTTR$ -- while simultaneously showing that rather different
techniques will be required to establish such a bound for $\DTTR$.  If
such a bound can be proved, then this would show $\BPP \subseteq \DTTR
\subseteq \PSPACE \cap \Ppoly$, which would in turn be a significant step
toward proving $\DTTR = \BPP$.  We refer the reader to \cite[Section 7]{adfht},
for additional motivation for studying the $\DTTR = \BPP$ question.

In contrast, the results in Sections \ref{sec:pspace-inclusion} and 
\ref{sec:no-advice} are more technical, and draw heavily on the
techniques of \cite{eccc}, in order to prove an upper bound for
$\DTTRtime$ that is similar to the bound proved for $\DTTR$ in
\cite{eccc}.

\section{Preliminaries}\label{prelimsec}

We assume the reader is familiar with basic complexity theory \cite{balcazar_dg:88} and Kolmogorov complexity \cite{LV}. We use $\le^p_T$ and $\Polytime^A$ when referring to polynomial-time Turing reductions, and $\le^p_{tt}$ and $\Polytime^A_{tt}$ for polynomial-time truth-table (or \emph{non-adaptive}) reductions.
For example, $M : A \le^p_{T} B$ means that $M$ is a polynomial-time Turing reduction from $A$ to $B$.
For a set $A$ of strings, $A^{\leq n}$ denotes the set of all strings of length at most $n$ in $A$.

We let $K_U$ denote Kolmogorov complexity with respect to prefix machine $U$, 
i.e., $K_U(x) = \min \{|p| : U(p) = x\}$.  (A prefix machine is a Turing 
machine with the property that, if it halts and produces output on some
input $p$, then it does {\em not} halt on any input of the form
$px$, for any nonempty string $x$.  For more details, see \cite{LV}.)
We use $R_{K_U}$ to denote the set of \emph{$K_U$-random} strings $\{ x | K_U(x) \ge |x| \}$. In this paper, a function $t:\N\to\N$ is called a ``time-bound'' if it is non-decreasing and time-constructible. 
(We follow the usual convention that a time-constructible function $t$ 
satisfies $t(n) \geq n$ for all $n$.  See \cite{balcazar_dg:88}.)
We use the following time-bounded version of Kolmogorov complexity: for a prefix machine $U$ and a time-bound $t$, $K_U^t(x)$ is the length of the smallest string $p$ such that $U(p)$ outputs $x$ and halts in fewer than $t(|x|)$ time steps. Then $R_{K^t_U}$ is the set of \emph{$K^t_U$-random} strings $\{ x | K_U^t(x) \ge |x| \}$.
Let us define what it means for a machine to be ``universal'' in the time-bounded setting:

\begin{defi}\label{timeeffdef}
  A prefix machine $U$ is a time-efficient universal prefix machine if there exist constants $c$ and $c_M$ for each prefix machine $M$, such that
  \begin{enumerate}
  \item $\forall x$, $K_U(x) \leq K_M(x) + c_M$, and
  \item $\forall x$, and for all time bounds $t$ and $t'$ where
$t > {t^\prime}^c$, $K^t_U(x) \leq K^{t'}_M(x) + c_M$.
  \end{enumerate}
\end{defi}

We will sometimes omit $U$ in the notation $K_U, R_{K_U}, K_U^t$, $R_{K_U^t}$, in which case we mean $U = U_0$, for some arbitrary choice of a time-efficient universal prefix machine $U_0$. Now we can formally define the time-bounded analogue of $\DTTR$:

\begin{defi}
$\DTTRtime$ is the class of languages $L$ such that there exists a time bound $t_0$ (depending on $L$) such that for all time-efficient universal prefix machines $U$ and for all time-constructible $t \geq t_0$, $L \leq_{tt}^p R_{K^t_U}$.
\end{defi}

Clearly, every language in $\DTTRtime$ is decidable.


The proof of Corollary 12 in \cite{bfkl} shows that, if $t \ge t_0 = 2^{2^{2 n}}$, then $\BPP \le^p_{tt} R_{K_U^t}$, for any time-efficient universal $U$. This implies:

\begin{thm}[\cite{bfkl}]
  $\BPP \subseteq \DTTRtime$.
\end{thm}

Now we prove some basic facts about time-bounded prefix-free Kolmogorov complexity.

\begin{prop}\label{simpprop}
  For any machine $M$ and $t'(|x|) > 2^{|x|} t(|x|)$, the query $x \in R_{K_M^t}?$ can be answered in time $t'$.
\end{prop}

\begin{proof}
  Simulate the machine $M$ on every string of length less than $|x|$ for $t(|x|)$ steps.  Because there are fewer than $2^{|x|}$ such strings, the bound follows.
\end{proof}

\begin{prop}\label{simpprop2}
Let $L \leq_{tt}^p R_{K^t_U}$ for some time-bound $t$.
Then there exists a constant $k$ such that the language $L$ can be decided in  $t_L(n) = 2^{n^k}t(n^k)$ time.
\end{prop}

\begin{proof}
Let $M$ be a machine that decides $L$ by running the polynomial-time truth-table reduction from $L$ to $R_{K^t_{U}}$, and computing by brute-force the answer to any queries from the reduction.  Using Proposition \ref{simpprop}, we have that for large enough $k$, $M$ runs in at most $t_L(n) = 2^{n^k} t(n^k)$ time, so $L$ is decidable within this time-bound.
\end{proof}

It is the ability to compute $R_{K^t}$ for short strings that makes the time-bounded case different from the ordinary case. This will be seen in proofs throughout the paper.

\section{How and why to distinguish $R_K$ from $R_{K^t}$}\label{sec:small-circuits}


At first glance, it seems reasonable to guess that a polynomial-time reduction would have difficulty telling the difference between an oracle for $\RK$ and an oracle for $\RKtime$, for large enough $t$. Indeed $\RK \subseteq \RKtime$ and in the limit for $t \rightarrow \infty$ they coincide.

One might even suspect that a polynomial-time reduction must behave the same way with $R_{K^t}$ and $R_K$ as oracle, already for  modest time bounds $t$.
However, this intuition is wrong.  Here is an example for adaptive polynomial-time reductions.

\begin{obs}
  There is a polynomial-time algorithm which, given oracle access to $\RK$ and input $1^n$, outputs a $K$-random string of length $n$. However, for any time-bound $t$ such that
$t(n+1) \gg 2^n t(n),$
  there is no polynomial-time algorithm which, given oracle access to $\RKtime$ and input $1^n$, outputs a $\K^t$-random string of length $n$.
\end{obs}

For the algorithm, see \cite{buhrman:2005}; roughly, we start with a small random string and then use 
\cite[Theorem 15]{buhrman:2005} (described later)
to get a successively larger random string. But in the time-bounded case in~\cite{BuhrmanM97} it is shown that on input $1^n$, no polynomial-time machine $M$ can query (or output) any $K^t$-random string of length $n$: in fact, $M(1^n)$ is the same for both oracles $R_{K^t}$ and $R' = R_{K^t}^{\le n - 1}$. This is proven as follows: since $R'$ can be computed in time $t(n)$ (by Proposition \ref{simpprop}), then any query of length $\ge n$ made by $M^{R'}(1^n)$ is described by a pointer of length $O(\log n)$ in time $t(n)$, and hence is not in $R_{K^t}$.

\subsection{Small circuits for sets reducible to $R_{K^t}$}

We now prove that $\DTTRtime$ is a subset of $\Ppoly$. Actually, we will prove that this holds even for {\em Turing} reductions to $\RKU$, (for {\em every} universal Turing machine $U$):

\begin{thm}\label{ppolythm}
  Suppose $A \in \DTIME(t_1)$ and $M:A \le^p_{T} R_{K^t}$, for some time-bounds $t, t_1$ with $t(n+1) \ge 2^n t(n) + 2^{2^n} t_1(2^n)$.\footnote{For example, if $A \in \EXP$, then $t$ can be doubly-exponential. If $A$ is elementary-time computable, then $t$ can be an exponential tower.} Then $A \in \Ppoly$; in fact, if $M$ runs in time $n^c$, and $R' = R_{K^t}^{\le \lceil (c +1) \log n\rceil}$, then $\forall x \in \{ 0, 1 \}^n \; M^{R'}(x) = A(x)$.
\end{thm}

\begin{proof}
Let $\ell(n) = \lceil (c + 1) \log n \rceil$, and let 
$R'(n) = R_{K^t}^{\le \ell(n)}$.  Showing that $M^{R'}(x) = A(x)$ for all
$x$ of each length $n$ suffices to show that $A \in \Ppoly$, because
$R'$ consists of only polynomially-many strings, which can be encoded in
an advice string of length polynomial in $n$.

Suppose, for the sake of contradiction, that
$M^{R'(n)}(x) \not= A(x)$ for some $x$ of length $n$. Then we may find the first such $x$ in time $2^{\ell(n)} t(\ell(n)) + 2^{n+1} (t_1(n) + O(n^c))$ (cf. Proposition \ref{simpprop}), and each query made by $M^{R'(n)}(x)$ can be output by a program of length $c \log n + O(1)$, running in the same time bound. But since $A(x) \not= M^{R'(n)}(x)$, it must be that, with $R'(n)$ as oracle, $M$ makes some query $q$ of size $m \ge \ell(n) + 1$ which is random for $t$-bounded Kolmogorov complexity (because both small and nonrandom queries are answered correctly when using $R'$ instead of $R_{K^t}$). Hence we have both that $q$ is supposed to be random, and that $q$ can be output by a program of length $< \ell(n)$ in time $2^{\ell(n)} t(\ell(n)) + 2^{n+1} (t_1(n) + O(n^c)) \ll 2^{\ell(n)} t(\ell(n)) + 2^{2^{\ell(n)}} t_1(2^{\ell(n)}) \le t(\ell(n) + 1) \le t(m)$, which is a contradiction.
\end{proof}

\begin{corollary}\label{ttrtcor}
$\DTTRtime \subseteq \Ppoly$.
\end{corollary}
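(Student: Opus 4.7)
The plan is to deduce Corollary~\ref{ttrtcor} from Theorem~\ref{ppolythm}. Given $L\in\DTTRtime$, the goal is to exhibit a single time bound $t$ that is simultaneously (a) one of the time bounds for which the definition of $\DTTRtime$ guarantees a polynomial-time truth-table reduction $L\leq^p_{tt} R_{K^t}$, and (b) grows fast enough to satisfy the hypothesis $t(n+1)\geq 2^n t(n)+2^{2^n}t_1(2^n)$ of Theorem~\ref{ppolythm}, where $t_1$ is a deterministic time bound for $L$. A truth-table reduction is a special case of a Turing reduction, so once such a $t$ is in hand the theorem directly yields $L\in\Ppoly$.

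First, I would unpack the definition of $\DTTRtime$ to obtain a time-constructible $t_0$ (depending on $L$) such that $L\leq^p_{tt} R_{K^t_U}$ for every time-constructible $t\geq t_0$ and every time-efficient universal $U$. Instantiating this with $t=t_0$ and applying Proposition~\ref{simpprop2} yields a constant $k$ for which $L\in\DTIME(t_1)$ with $t_1(n)=2^{n^k}t_0(n^k)$; since $t_0$ is time-constructible, so is $t_1$. This step provides both the explicit deterministic bound $t_1$ needed by Theorem~\ref{ppolythm} and the floor $t_0$ for admissible time bounds.

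Next I would produce a time-constructible $t\geq t_0$ meeting the growth condition via the explicit recursion
\[
  t(0)\;=\;t_0(0),\qquad t(n+1)\;=\;t_0(n+1)+2^n\,t(n)+2^{2^n}\,t_1(2^n).
\]
Time-constructibility follows because $t_0$ and $t_1$ are time-constructible and each new value of $t$ is computed from the previous one by elementary arithmetic; the inequality $t(n+1)\geq 2^n t(n)+2^{2^n}t_1(2^n)$ and the domination $t(n)\geq t_0(n)$ are immediate. The defining property of $\DTTRtime$ now supplies a reduction $L\leq^p_{tt} R_{K^t}$, and hence in particular a polynomial-time Turing reduction $L\leq^p_T R_{K^t}$.

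Finally, I would invoke Theorem~\ref{ppolythm} with this pair $(t,t_1)$ to conclude $L\in\Ppoly$; since $L\in\DTTRtime$ was arbitrary, this gives $\DTTRtime\subseteq\Ppoly$. The only point requiring care is the apparently circular dependence of $t$ on $t_1$ and of $t_1$ on the reduction. It untangles cleanly because $\DTTRtime$ guarantees reductions for \emph{every} sufficiently large time-constructible $t$: first fix $t_1$ from the reduction already available at the floor $t_0$, and only afterwards pick the faster-growing $t$ demanded by Theorem~\ref{ppolythm}. I do not anticipate any substantive obstacle beyond this bookkeeping.
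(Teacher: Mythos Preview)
Your proposal is correct and follows essentially the same approach as the paper: obtain $t_0$ from the definition of $\DTTRtime$, use Proposition~\ref{simpprop2} to extract a deterministic time bound $t_1$ (the paper calls it $t_L$), then pick a fast-growing time-constructible $t\ge t_0$ satisfying the hypothesis of Theorem~\ref{ppolythm} and apply that theorem. You are simply more explicit than the paper about the recursive construction of $t$ and its time-constructibility, and about untangling the apparent circularity, but the argument is the same.
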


\begin{proof}
Let $L \in \DTTRtime$.
By the definition of $\DTTRtime$, $L \leq_{tt}^p R_{K^{t_0}}$ for some $t_0$.
Using Proposition \ref{simpprop2}, we then have that $L$ is decidable in time $t_L(n) = 2^{n^k}t_0(n^k)$ for some constant $k$.
Choose a time-bound $t$ such that $t(n+1) \ge 2^n t(n) + 2^{2^n} t_L(2^n)$.
By the definition of $\DTTRtime$, since $t > t_0$, we have that $L \leq_{tt}^p R_{K_{U_0}^{t}}$, from which by Theorem \ref{ppolythm} it follows that $L \in \Ppoly$.
\end{proof}

$\PSPACE \le^p_T R_K$ \cite{power}, but Theorem \ref{ppolythm} implies that $\PSPACE \not\le^p_T R_{K^t}$ for sufficiently-large $t$, unless $\PSPACE \subseteq \Ppoly$. This highlights the difference between the time-bounded and ordinary Kolmogorov complexity, and how this comes to the surface when working with reductions to the corresponding sets of random strings.
We wish to emphasize at this point that the proof of the inclusion
$\PSPACE \le^p_T R_K$ relies on the ability of a $\Polytime^{R_K}$ computation
to construct a large element of $R_K$, whereas the $\Ppoly$ upper bound in
the time-bounded case relies on the {\em inability} to use the oracle to
find such a string, in the time-bounded setting.
 

\subsection{A reduction distinguishing $R_K$ from $R_{K^t}$, and an incorrect conjecture}\label{pasec}

Theorem \ref{ppolythm} shows that a polynomial-time truth-table reduction to $R_{K^t}$ for sufficiently-large $t$ will work just as well if only the logarithmically-short queries are answered correctly, and all of the other queries are simply answered ``no''.

The authors of \cite{adfht} conjectured that a similar situation would hold if the oracle were $\RK$ instead of $R_{K^t}$.  More precisely, they proposed a 
proof-theoretic approach towards proving that $\DTTR$ is in $\Ppoly$: Let $\PA_0$ denote Peano Arithmetic, and for $k>0$ let $\PA_k$ denote $\PA_{k-1}$ augmented with the axiom ``$\PA_{k-1}$ is consistent''.  In \cite{adfht} it is shown that, for any polynomial-time truth-table reduction $M$ reducing a decidable set $A$ to $\RK$, one can construct a true statement of the form $\forall n \forall j \forall k \Psi(n,j,k)$ (which is provable in a theory such as Zermelo-Frankel), with the property that if, for each fixed ({\bf n},{\bf j},{\bf k}) there is some $k'$ such that $\PA_{k'}$ proves $\psi$({\bf n},{\bf j}, {\bf k}), then $\DTTR \subseteq \Ppoly$.  Furthermore, if these statements were provable in the given extensions of $\PA$, it would follow that, for each input length $n$, there is a finite subset $R' \subseteq \RK$ consisting of strings having length at most $O(\log n)$, such that $M^{R'}(x) = A(x)$ for all strings $x$ of length $n$.

Thus the authors of \cite{adfht} implicitly conjectured that, for any polynomial-time truth-table reduction of a decidable set to $\RK$, and for any $n$, there would be some setting of the short queries so that the reduction would still work on inputs of length $n$, when all of the long queries are answered ``no''.  While we have just seen that this is precisely the case for the time-bounded situation, the next theorem shows that this does not hold for $\RK$, even if ``short'' is interpreted as meaning ``of length $< n$''.
(It follows that infinitely many of the statements $\psi$({\bf n},{\bf j}, {\bf k}) of \cite{adfht} are independent of every $\PA_{k'}$.)

\begin{thm}\label{thm:distinguish}
  There is a truth-table reduction $M: \{ 0, 1 \}^* \le^p_{tt} R_K$, such that, for all large enough $n$:
  \[
  \forall R' \subseteq \{ 0, 1 \}^{\le n-1} \exists x \in \{ 0, 1 \}^n \; M^{R'}(x) \not= 1.
  \]
\end{thm}

%

\begin{proof}  
Theorem 15 of \cite{buhrman:2005} presents a polynomial-time procedure which, 
given a string $z$ of even length $n-2$, will output a list of constantly-many 
strings $z_1, \ldots, z_c$ of length $n$, such that at least one of them will 
be $K$-random if $z$ is.  
We use this to define our reduction $M$ as follows: on input $x = 00\ldots0 z$ of length $n$ having even $|z|$, we query each of $z$, $z_1, \ldots, z_c$, and every string of length at most $\log n$. If there are no 
  strings of length at most $\log n$ in the oracle, we reject. Else, if $z$ is in the oracle but none of the $z_i$ are, we reject. On all other cases we accept.

By \cite[Theorem 15]{buhrman:2005}, and since $R_K$ has strings at every length, it is clear that $M$ accepts every string with oracle $R_K$, and rejects every string if $R' = \varnothing$.  However, for any non-empty set $R' \subseteq \{ 0, 1 \}^{\le n-1}$, let $\ell\le n-1$ be the highest even length for which ${R'}^{=\ell} \not= \varnothing$, and pick $z\in {R'}^{=\ell}$.  Then we will have $z \in {R'}^{=\ell}$ but every $z_i \not\in R^{=\ell + 2}$, hence $M^{R'}(00\ldots0 z)$ rejects.
\end{proof}

In fact, if we let $R' = R_{K^t}^{\le n - 1}$, for even $n$, then for the first $x = 00 z$ such that $M^{R'}(x) = 0$, we will have $z \in R' \subseteq R_{K^t}$, but each $z_i$ can be given by a small pointer in time $O(2^{n-1} t(n-1))$ (again we use Proposition \ref{simpprop}), and hence $z_i \not\in R_{K^t}$ for suitably fast-growing $t$. Thus $M^{R_{K^t}}(x) = 0 \not= M^{R_K}(x)$, and we conclude:

\begin{obs}
  If $t(n+1) \gg 2^n t(n)$, then the non-adaptive reduction $M$ above behaves differently on the oracles $R_K$ and $R_{K^t}$.
\end{obs}

\section{Polynomial Space with Advice}\label{sec:pspace-inclusion}

Our single goal for this section is proving the following:

\begin{thm}\label{main}
  For any computable unbounded function $\alpha(n) = \omega(1)$, $$\DTTRtime \subseteq \PSPACE / \alpha(n).$$
\end{thm}

The proof of this theorem is patterned closely on related arguments in \cite{eccc}, although a number of complications arise in the time-bounded case.  
Although we aim to make the presentation here self-contained, \cite{eccc} is a good primer and a source of additional intuition for the proof.
Also, one can refer to the conference version of this paper \cite{mfcspaper} for a presentation that is not self-contained but emphasizes the differences between the proof in the time-bounded case and the unbounded case.
Before proving the theorem we present several supporting propositions.

\begin{prop}\label{prefix_entr}
For any time bound $t$ and time-efficient universal prefix machine $U$, 
$$\sum_{x \in \{0,1\}^*} 2^{-K^t_U(x)} \leq 1.$$
\end{prop}

\begin{proof}
From the Kraft Inequality (see e.g. \cite{LV}, Theorem 1.11.1),  $\sum_{x \in \{0,1\}^*} 2^{-K_U(x)} \leq 1$ for any prefix machine $U$.
For any time bound $t$ and string $x$, $K_U^t(x) \geq K_U(x)$, so adding a time bound can only decrease the sum on the left side of this inequality.
\end{proof}

\begin{prop}[Analogue to Coding Theorem]\label{coding}
  Let $f$ be a function such that
  \begin{enumerate}
  \item $\sum_{x \in \{0,1\}^{*}} 2^{-f(x)} \leq 1$, and
  \item there is a machine $M$ computing $f(x)$ in time $t(|x|)$.
  \end{enumerate}
Let $t^\prime(|x|) > {2^{2|x|}} t(|x|)$.  Then for some $M^\prime$, $K^{t^\prime}_{M^\prime}(x) = f(x)+2$.
\end{prop}

\begin{proof}

  The proof is similar to the proof of Proposition 5 from \cite{eccc}.  Let $$E = \langle x_0, f(x_0) \rangle, \langle x_1, f(x_1) \rangle, \ldots$$ be an enumeration of the function $f$ ordered lexicographically by the strings $x_i$.

  We identify the set of infinite sequences $S = \{0,1\}^\infty$ with the half-open real interval $[0,1)$; that is, each real number $r$ between 0 and 1 will be associated with the sequence(s) corresponding to the infinite binary expansion of $r$.  We will associate each element $\langle x_i,f(x_i) \rangle$ from the enumeration $E$ with a subinterval $I_i \subseteq S$ as follows:

  $I_0 = [0,2^{-f(x_0)})$, and for $i \geq 1$, $I_i = [\sum_{k < i}2^{-f(x_k)}, \sum_{k\leq i}2^{-f(x_k)})$.  That is, $I_i$ is the half-open interval of length $2^{-f(x_i)}$ that occurs immediately after the interval corresponding to the element $\langle x_{i-1},f(x_{i-1}) \rangle$ that appeared just prior to $\langle x_i, f(x_i) \rangle$ in the enumeration $E$.

  Since $\sum_{i \geq 0} 2^{-f(x_i)} \leq 1$, each $I_i \subseteq S$.

  Any {\em finite} string $z$ also corresponds to a subinterval $\Gamma_z \subseteq S$ consisting of all infinite sequences that begin with $z$; $\Gamma_z$ has length $2^{-|z|}$.  Given any element $\langle x_i, f(x_i) \rangle$, there must exist a lexicographically first string $z_i$ of length $f(x_i)+2$ such that $\Gamma_{z_i} \subseteq I_i$.  Observe that, since the intervals $I_i$ are disjoint, no string $z_i$ is a prefix of any other.

  Let $M^\prime$ be the following machine.  On input $z$, $M^\prime$ runs $M$ to compute the enumeration $E$ until it finds an element $\langle x_i, f(x_i) \rangle$ that certifies that $z = z_i$.  If it finds such an element then $M^\prime$ outputs $x_i$.

  Suppose that $M^\prime$ outputs $x_i$ on input $z$, and let $\langle x_i, f(x_i) \rangle$ be the element of $E$ corresponding to $x_i$.  Before outputting $x_i$, $M^\prime$ must compute $|\langle x_j, f(x_j) \rangle|$ for every string $x_j$ such that $x_j < x_i$ (under the lexicographical ordering).  There are at most $2^{|x_i| + 1}$ strings $x_j$ such that $x_j < x_i$, so overall this will take less than $2^{2|x_i|} t(|x_i|)$ time.

  $M^\prime$ will be a prefix machine, and we have that $K^{t^\prime}_{M^\prime}(x) = f(x) + 2$.\end{proof}

Given two Kolmogorov complexity functions, their minimum is {\em not} 
necessarily going to be a Kolmogorov complexity function; this is the case
both in the time-bounded setting and in the traditional setting without
time bounds.  But one can come close.  The following proposition establishes
that there is a time-bounded Kolmogorov complexity function that is precisely
one more than the minimum of two other given time-bounded Kolmogorov
complexity functions.

\begin{prop}[Analogue to Proposition 6 from \cite{eccc}]\label{old6}
  Let $U$ be a time-efficient universal prefix Turing machine and let $M$ be any prefix Turing machine. Suppose that $t,t^\prime$, and $t''$ are time bounds and $f,g$ are two time-constructible increasing functions, such that $f$ is upper bounded by a polynomial, and $t^{\prime\prime}(|x|) \geq \max\{f(t(|x|)),g(t^\prime(|x|))\}$.

  Then there is a time-efficient universal prefix machine $U^{\prime}$ such that
  \[K_{U^{\prime}}^{t^{\prime\prime}}(x) = \min(K_U^t(x), K_{M}^{t^\prime}(x)) + 1.\]
\end{prop}

\begin{proof}
  On input $0y$, $U^{\prime}$ runs $U$ on input $y$.  If $U$ would output string $x$ on $y$ after $s$ steps, then $U^{\prime}$ outputs string $x$ after $f(s)$ steps.  Similarly, on input $1y$, $U^\prime$ runs $M$ on input $y$.  If $M$ would output string $x$ on $y$ after $s$ steps, then $U^{\prime}$ outputs string $x$ after $g(s)$ steps.

  Note that because $U$ is an efficient universal prefix machine, $U^\prime$ will be an efficient universal prefix machine as well. 
\end{proof}

\begin{prop}[Analogue of Proposition 7 from \cite{eccc}]\label{old7}
  Given any time-efficient universal prefix machine $U$, time bound $t$, and constant $c \geq 0$, there is a time-efficient universal prefix machine $U^\prime$ such that $K^t_{U^\prime}(x) = K^t_U(x) + c$.
\end{prop}

\begin{proof}
  On input $0^cx$, $M^\prime$ runs $M$ on input $x$, and doesn't halt on other inputs. 
\end{proof}

\begin{proof}[Proof of Theorem \ref{main}]
  Fix $\alpha$, and suppose for contradiction that $L \in \DTTRtime - \PSPACE/\alpha(n)$.  Let $t_0$ be the time bound given in the definition of $\DTTRtime$, and assume without loss of generality that $t_0(n)$ is greater than the time required to compute the length of the advice $\alpha(n)$,  and let $U_0$ be some arbitrary time-efficient universal prefix machine.  By the definition of $\DTTRtime$, $L \leq_{tt}^p R_{K^{t_0}_{U_0}}$.  
Therefore, by Proposition \ref{simpprop2}, $L$ is decidable in time $t_L(n) = 2^{n^k} t_0(n^k)$ for some constant $k$.

Let $\ts(n)$ be an extremely fast-growing time-constructible function, so that for any constant $d$, we have $\ts(\log(\alpha(n))) > 2^{n^d}t_L(n)$ for all large $n$.  To get our contradiction, we will show that there exists a time-efficient universal prefix machine $U$ such that $L \not \leq_{tt}^p R_{K^{\ts^3}_U}$.  Note that because $\ts > t_0$, this is a contradiction to the fact that $L \in \DTTRtime$.

  For any function $f : \{0,1\}^* \rightarrow \N$, define $R_f = \{x : f(x) \geq |x|\}$. We will construct a function $F : \{0,1\}^* \rightarrow \N$ and use it to form a function $H: \{0,1\}^* \rightarrow \N$ such that:
  \begin{enumerate}
  \item $F$ is a total function and $F(x)$ is computable in time $\ts^2(|x|)$ by a machine $M$;
  \item $H(x) = \min(K_{U_0}^\ts(x) + 5, F(x) + 3)$;
  \item $\sum_{x \in \{0,1\}^*} 2^{-H(x)} \leq 1/8$;
  \item $L \not \leq ^p_{tt} R_H$.
  \end{enumerate}
\end{proof}

\begin{clm}[Analogue of Claim 1 from \cite{eccc}]\label{old1}
  Given the above properties $H = K_{U}^{\ts^3}$ for some efficient universal prefix machine $U$.
\end{clm}

By Property 4 this ensures that the theorem holds.

\begin{proof}
  By Property 3 we have that $\sum_{x \in \{0,1\}^*} 2^{-(F(x)+3)} \leq 1/8$.  Hence $\sum_{x \in \{0,1\}^*} 2^{-F(x)} \leq 1$.  Using this along with Property 1, we then have by Proposition \ref{coding} that $K^{\ts^3}_{M^\prime} = F+2$ for some prefix machine $M^\prime$.  By Proposition \ref{old7} we have that $K^\ts_{U'} = K_{U_0}^\ts + 4$ for some efficient universal prefix machine $U'$.  Therefore, by Proposition \ref{old6}, with $f(n) = n^3, g(n) = n$, we find that $H(x) = \min(K^{\ts}_{U_0}(x) + 5, F(x) + 3) = \min(K^{\ts^3}_{M'}, K^{\ts}_{U'}(x)) + 1$ is $K^{\ts^3}_{U}$ for some efficient universal prefix machine $U$. 
\end{proof}
All we now need to show is that, for our given language $L$, we can always construct functions $H$ and $F$ with the four desired properties.  

Let $\gamma_1,\gamma_2,\ldots$ be a list of all possible polynomial-time truth-table reductions from $L$ to $R_H$.
This is formed in the usual way: we take a list of all Turing machines and 
put a clock of $n^i+i$ on the $i$th one and
we will interpret the output on a string $x$ as an encoding of a Boolean circuit
on atoms of the form ``$z \in R_H$''. (i.e. these atoms form the input gates of the circuit, and their truth values determine the output of the circuit.)  We will refer to the string $z$ as a \emph{query}.

As in \cite{eccc}, to ensure that $L \not \leq^p_{tt} R_H$ (Property 4), we need to satisfy an infinite list of requirements of the form\\
\hspace*{.25in}$R_e : \gamma_e \mbox{ is not a polynomial-time truth-table reduction of $L$ to } R_H.$

As part of our construction we will set up and play a number of games, which will enable us to satisfy each of these requirements $R_e$ in turn.
Our moves in the game will define the function $F$ (and thus indirectly $H$).  Originally we have that $F(z) = 2|z|+3$ for all strings $z$. 
Potentially during one of these games, we will play a move forcing a string $z$ to be in the complement of $R_H$.  To do this we will set $F(z) = |z|- 4$.  Therefore, a machine $M$ can compute $F(z)$ by running our construction, looking for the first time during the construction that $F(z)$ is set to $|z| - 4$, and outputting $|z| - 4$.  If a certain amount of time elapses (to be determined later) during the construction without $F(z)$ ever being set to $|z| - 4$, then the machine $M$ outputs the default value $2|z|+3$.

\subsection{Description of the games}

Let us first describe abstractly the games that will be played during the construction; afterwards we will explain how it is that we use these games to satisfy each requirement $R_e$. (Note that these games are defined differently than those in \cite{eccc}).

For a given requirement $R_e$, a game $\G_{e,x}$ will be played as followed for some string $x$:

First we calculate the circuit $\gamma_{e,x}$, which is the output of the reduction $\gamma_e$ on input $x$.  Let $F^*$ be the function $F$ as it is at this point of the construction when the game $\G_{e,x}$ is about to be played.  For any atom ``$z_i \in R_H$'' that is an input of this circuit such that $|z_i| \leq \log(\alpha(|x|))-1$, we calculate $r_i = \min(K^\ts_{U_0}(z_i)+5, F^*(z_i) + 3)$.  If $r_i < |z_i|$ we substitute FALSE in for the atom, and simplify the circuit accordingly, otherwise we substitute TRUE in for the query, and simplify the circuit accordingly.  (We will refer to this as the ``pregame preprocessing phase''.)

The remaining queries $z_i$ are then ordered by increasing length.  There are two players, the $F$ player (whose moves will be played by us during the construction), and the $K$ player (whose moves will be determined by $K_{U_0}^\ts$).  As in \cite{eccc}, in each game the $F$ player will either be playing on the YES side (trying to make the final value of the circuit equal TRUE), or the NO side (trying to make the final value of the circuit equal FALSE).

Let $S_1$ be the set of queries from $\gamma_{e,x}$ of smallest length, let $S_2$ be the set of queries that have the second smallest length, etc.  So we can think of the queries being partitioned into an ordered set $\S = (S_1, S_2, \ldots, S_r)$ for some $r$.

The scoring for the game is similar to that in \cite{eccc}; originally each player has a score of 0 and a player loses if his score exceeds some threshold $\epsilon$.  When playing a game $\G_{e,x}$, we set $\epsilon = 2^{-e-3}$.

Originally we have that the truth value of all the atoms in the game are TRUE.
In round one of the game, the $K$ player makes some (potentially empty) subset $Z_1$ of the queries from $S_1$ nonrandom; i.e. for each $z \in Z_1$ he sets the atom ``$z \in R_H$'' to the value FALSE.  For any $Z_1 \subseteq S_1$ that he chooses to make nonrandom, 
$\sum_{z \in Z_1} (2^{-(|z| - 6)} - 2^{-(2|z|+3)})$ is added to his score.
As in \cite{eccc}, a player can only legally make a move if doing so will not cause his score to exceed $\epsilon$.



After the $K$ player makes his move in round 1, the $F$ player responds, by making some subset $Y_1$ of the queries from $S_1 - Z_1$ nonrandom.  After the $F$ player moves, $\sum_{z \in Y_1} 2^{-(|z| - 4)} - 2^{-(2|z|+3)}$ is added to his score.

This is the end of round one.  Then we continue on to round two, played in the same way.  The $K$ player goes first and makes some subset of the queries from $S_2$ nonrandom (which makes his score go up accordingly), and then the $F$ player responds by making some subset of the remaining queries from $S_2$ nonrandom.  Note that if a query from $S_i$ is not made nonrandom by either the $K$ player or the $F$ player in round $i$, it cannot be made nonrandom by either player for the remainder of the game.

After $r$ rounds are finished the game is done and we see who wins, by evaluating the circuit $\gamma_{e,x}$ using the answers to the queries that have been established by the play of the game.  If the circuit evaluates to TRUE (FALSE) and the $F$ player is playing as the YES (NO) player, then the $F$ player wins, otherwise the $K$ player wins.

Note that the game is asymmetric between the $F$ player and the $K$ player; the $F$ player has an advantage due to the fact that he plays second in each round and can make an identical move for fewer points than the $K$ player.  Because the game is asymmetric, it is possible that $F$ can have a winning strategy playing on {\em both} the YES and NO sides.  Thus we define a set $val(\G_{e,x^\prime}) \subseteq \{0,1\}$ as follows: $0 \in val(\G_{e,x^\prime})$ if the $F$ player has a winning strategy playing on the NO side in $\G_{e,x^\prime}$, and $1 \in val(\G_{e,x^\prime})$ if the $F$ player has a winning strategy playing on the YES side in $\G_{e,x^\prime}$.

\subsection{Description of the construction}

Now we describe the construction.  In contrast to the situation in \cite{eccc}, we do not need to worry about playing different games simultaneously or dealing with requirements in an unpredictable order; we will first satisfy $R_1$, then $R_2$, etc.  To satisfy $R_e$ we will set up a game $\G_{e,x}$ for an appropriate string $x$ of our choice, and then play out the game in its entirety as the $F$ player.  We will choose $x$ so that we can win the game $\G_{e,x}$, and will arrange that by winning the game we ensure that $R_e$ is satisfied.  

A complication that arises is that the player $K$ (whose moves are decided by
$U_0$) is not constrained to make only ``legal'' moves.  That is, player $K$
might decide to make moves that exceed the legal threshold while playing some
of the games.  If the $K$ player ``cheats'' on game $\G_{e,x}$, then we quit 
the game $\G_{e,x}$ and we play $\G_{e,x^\prime}$ for some new $x^\prime$.  
However, we will show that the $K$ player cannot cheat infinitely often on games for a particular $e$, so eventually $R_e$ will be satisfied.

Originally we define the function $F$ so that $F(z) = 2|z|+3$ for all strings $z$.
Suppose $s$ time steps have elapsed during the construction up to this point, and we are getting ready to construct a new game in order to satisfy requirement $R_e$.  (Either because we just finished satisfying requirement $R_{e-1}$, or because $K$ cheated on some game $\G_{e,x}$, so we have to start a new game $\G_{e,x^\prime}$).
Starting with the string $0^{\ts^4(s)}$ (i.e. the string of $\ts^4(s)$ zeros), we search strings in lexicographical order until we find an $x^\prime$ such that $(1 - L(x^\prime)) \in val(\G_{e,x^\prime})$.  (Here, $L$ denotes the characteristic function of the set $L$.)

Once we find such a string $x^\prime$ (which we will prove we always can), then we play out the game $\G_{e,x^\prime}$ with the $F$ player (us) playing on the YES side if $L(x^\prime) = 0$ and the NO side if $L(x^\prime) = 1$.  To determine the $K$ player's move in the $i$th round, we let $Z_i \subseteq S_i$ be those queries $z \in S_i$ for which $K_{U_0}^{\ts}(z) \leq |z| - 6$.  Our moves are determined by our winning strategy;  whenever we play a move that makes a query $z$ nonrandom, we update the function $F$ so that $F(z) = |z| - 4$. 
Note that whenever either of the players plays a move involving a query $z$ in one of the games (which we have called ``making $z$ nonrandom''), he \emph{does} make the query $z$ nonrandom in the sense that $R_H(z)$ is fixed to the value 0 for good.

To finish showing that Property 4 will be satisfied, it suffices to prove the following three claims.

\begin{clm}
If during the construction we win a game $\G_{e,x}$, then $R_e$ will be satisfied and will stay satisfied for the remainder of the construction.
\end{clm}

\begin{proof}
Suppose that we win a game $\G_{e,x}$.  
Let $H^* = \min(K_{U_0}^\ts + 5, F^* + 3)$, where $F^*$ is the function $F$ immediately after the game $\G_{e,x}$ is completed.
Our having won the game implies that when evaluating the circuit $\gamma_{e,x}$, while substituting the truth value of ``$z \in R_{H^*}$'' for any query of the form ``$z \in R_H$'', we have that $\gamma_{e,x} \neq L(x)$, which means that the reduction $\gamma_e$ does not output the correct value on input $x$ and thus $R_e$ is satisfied.
For any game $\G_{e^\prime,x^\prime}$ that is played later in the construction, by design $x^\prime$ is always chosen large enough so that any query that is not fixed during the pregame preprocessing has not appeared in any game that was played previously, so $\G_{e^\prime, x^\prime}$ will not conflict with $\G_{e,x}$ and $R_e$ will remain satisfied for the remainder of the construction.
\end{proof}

\begin{clm}
For any given requirement $R_e$, the $K$ player will only cheat on games $R_{e,x}$ for a finite number of strings $x$.
\end{clm}

\begin{proof}
If the $K$ player cheats on a game $R_{e,x}$, it means that he makes moves that causes his score to exceed $\epsilon = 2^{-e-3}$.
By the definition of how $K$'s moves are determined, this implies that $\sum_{z \in Z_{e,x}} 2^{-(K^\ts_{U_0}(z)-6)} \geq \epsilon$, so  $2^{-K^\ts_{U_0}(z)} \geq \epsilon/64$, where $Z_{e,x}$ is defined to be the set of all the queries that appear in the game $\G_{e,x}$ that are not fixed during the preprocessing stage.
However, for any two games $G_{e,x}$ and $G_{e,x^\prime}$ the sets $Z_{e,x}$ and $Z_{e, x^\prime}$ are disjoint, so if $K$ cheated on an infinite number of games associated with the requirement $R_e$, then this would imply that $\sum_{z \in \{0,1\}^*} 2^{K^\ts_{U_0}(z)} \geq \epsilon/64 + \epsilon/64 + \cdots$.  But this divergence would violate Proposition \ref{prefix_entr}.
\end{proof}

\begin{clm}\label{old4}
  During the construction, for any requirement $R_e$, we can always find a witness $x$ with the needed properties to construct $\G_{e,x}$.
\end{clm}

\begin{proof}
Suppose for some requirement $R_e$, our lexicographical search goes on forever without finding an $x$ such that $(1 - L(x^\prime)) \in val(\G_{e,x^\prime})$.
Then we will show that $L \in \PSPACE / \alpha(n)$, which is a contradiction.

  Here is the $\PSPACE$ algorithm to decide $L$ (using small advice).  
Hardcode all the answers for the initial sequence of strings up to the point where we got stuck in the construction.  Let $F^*$ be the function $F$ up to that point in the construction.  On a general input $x$, construct $\gamma_{e,x}$.  The advice function $\alpha(n)$ will give the truth-table of $\min(K^\ts_{U_0}(z)+5, F^*(z) + 3)$ for all queries $z$ such that $|z| \leq \log(\alpha(|x|))-1$.  For any query $z$ of $\gamma_{e,x}$ such that $|z| \leq \log(\alpha(|x|))-1$, fix the answer to the query according to the advice.

  If the $F$ player had a winning strategy for both the YES and NO player on game $\G_{e,x}$, then we wouldn't have gotten stuck on $R_e$.  Also the $F$ player must have a winning strategy for either the YES or the NO player, since he always has an advantage over the $K$ player when playing the game.  Therefore, because we got stuck, it must be that the $F$ player has a winning strategy for the YES player if and only if $L(x) = 1$.  Once the small queries have been fixed, finding the side (YES or NO) for which the $F$ player has a winning strategy on $\G_{e, x}$, and hence whether $L(x) = 1$ or $L(x) = 0$, can be done in $\PSPACE$.

To prove this, we will show that the predicate ``The $F$ player has a winning strategy as the YES player on $\G_{e, x}$'' can be computed in alternating polynomial time, which by \cite{cks} is equal to $\PSPACE$.
To compute this predicate, we must determine if \emph{for every} move of the $K$ player in round 1, there \emph{exists} a move for the $F$ player in round 1, such that \emph{for every} move of the $K$ player in round 2, there \emph{exists} a move for the $F$ player in round 2... such that when the game is finished the circuit $\gamma_{e,x}$ evaluates to TRUE. 
We can represent any state of the game (i.e. which of the polynomial number of queries have been fixed to be nonrandom so far, the score of the players, the current round, and whose turn it is) by a number of bits bounded by a polynomial in $|x|$.
Also, given a move by one of the players, it is easy to determine in polynomial time whether the move is legal and to compute the new score of the player after the move. (It suffices to add up a
polynomial number of rationals of the form $a/2^b$ where $b = n^{O(1)}$).
Also, because there are only a polynomial number of queries in the circuit $\gamma_{e,x}$, the total number of moves in the game is bounded by a polynomial.
Finally, evaluating the circuit at the end of the game can be done in polynomial time.
Thus the predicate in question can be computed in alternating polynomial time, which completes the proof.
\end{proof}

The following claim shows that Property 1 is satisfied.

\begin{clm}\label{time}
  $F(z)$ is computable in time $\ts^2(|z|)$.
\end{clm}

\begin{proof}
  The function $F$ is determined by the moves we play in games during the construction.  In order to prove the claim, we must show that if during the construction we as the $F$ player make a move that involves setting a string $z$ to be nonrandom, then fewer than $\ts^2(|z|)$ time steps have elapsed during the construction up to that point.  The machine $M$ that computes $F$ will on input $z$ run the construction for $\ts^2(|z|)$ steps.  If, at some point before this during the construction, we as the $F$ player make $z$ nonrandom, then $M$ outputs $|z|-4$.  Otherwise $M$ outputs $2|z|+3$.

  Suppose during the construction that we as the $F$ player make a move that sets a query $z$ to be nonrandom during a game $\G_{e,x}$.  Note that $|z| \geq \log(\alpha(|x|))$, otherwise $z$ would have been fixed during the preprocessing stage of the game.

  There are at most $2^{|x|+1}$ strings $x^\prime$ that we could have considered during our lexicographic search to find a game for which we had a winning strategy before finally finding $x$.  Let $s$ be the number of time steps that have elapsed during the construction before this search began.

  Let us first bound the amount of time it takes to reject each of these strings $x^\prime$.  To compute the circuit $\gamma_{e,x^\prime}$ takes at most $|x^\prime|^k$ time for some constant $k$.  For each query $y$ such that $|y| \leq \log(\alpha(|x^\prime|))-1$ we compute $\min(K^\ts_{U_0}(y)+5, F^*(y) + 3)$.  To calculate $F^*(y)$ it suffices to rerun the construction up to this point and check whether a move had been previously made on the string $y$.  
To do this takes $s$ time steps, and by construction we have that $\ts(|z|) \geq \ts(\log \alpha(|x|)) > |x| \geq |x^\prime| \geq \ts^4(s)$, so $s < |z|$.
By Proposition \ref{simpprop}, to compute $K^\ts_{U_0}(y)$ takes at most $2^{|y|}\ts(|y|) \leq 2^{|z|}\ts(|z|)$ time steps.  Therefore, since there can be at most $|x^\prime|^k$ such queries, altogether computing $\min(K^\ts_{U_0}(y)+5, F^*(y) + 3)$ for all these $y$ will take fewer than $|x^\prime|^k 2^{|z|}\ts(|z|)$ time steps.

  Then we must compute $L(x^\prime)$, and check whether $(1-L(x^\prime)) \in val(\G_{e,x^\prime})$.  Computing $L(x^\prime)$ takes $t_L(|x^\prime|)$ time.  By Claim \ref{old4}, once the small queries have been fixed appropriately, computing $val(\G_{e,x^\prime})$ can be done in $\PSPACE$, so it takes at most $2^{|x^\prime|^d}$ time for some constant $d$.

  Compiling all this information, and using the fact that for each of these $x^\prime$ we have that $|x^\prime| \leq |x|$, we get that the total number of timesteps needed to reject all of these $x^\prime$ is less than
$2^{|x|^{d^{\prime}}} 2^{|z|}t_L(|x|)\ts(|z|) $ for some constant $d^\prime$.

  During the actual game $\G_{e,x}$, before $z$ is made nonrandom the construction might have to compute $K_{U_0}^{\ts}(y) + 5$ for all queries of $\gamma_{e,x}$ for which $|y| \leq |z|$.  By Proposition \ref{simpprop} this takes at most $|x|^k 2^{|z|} \ts(|z|)$ time. 

  Therefore, overall, for some constant $d^{\prime\prime}$ the total amount of time steps elapsed before $z$ is made nonrandom in the construction is at most
  \[ T = 2^{|x|^{d^{\prime\prime}}} 2^{|z|} t_L(|x|)\ts(|z|) + s < \ts^2(|z|). \] 

  Here the inequality follows from the fact that $\ts(\log(\alpha(|x|))) > 2^{|x|^d}t_L(|x|)$ for any constant $d$, and that $|z| \geq \log(\alpha(|x|))$ . 
\end{proof}

Finally, to finish the proof of the theorem we need to show that Property 3 is satisfied.
\begin{clm}
$\sum_{x \in \{0,1\}^*} 2^{-H(x)} \leq \frac{1}{8}$.
\end{clm}

\begin{proof}To begin, notice that

\begin{align*}
\sum_{x \in \{0,1\}^*} 2^{-H(x)} &= \sum_{x \in \{0,1\}^*} 2^{-\min(K_{U_0}^\ts(x) + 5, F(x) + 3)} \leq \sum_{x \in \{0,1\}^*} 2^{-(K_{U_0}^\ts(x) + 5)} + \sum_{x \in \{0,1\}^*} 2^{-(F(x)+3)}.
\end{align*}
By Proposition \ref{prefix_entr},  $\sum_{x \in \{0,1\}^*} 2^{-K_{U_0}^\ts(x)} \leq 1$, so $\sum_{x \in \{0,1\}^*} 2^{-(K_{U_0}^\ts(x) + 5)} \leq 1/32$.
We also have that $\sum_{x \in \{0,1\}^*} 2^{-(F(x)+3)} = (1/8)\sum_{x \in \{0,1\}^*} 2^{-F(x)}$.
Therefore, it is enough that $\sum_{x \in \{0,1\}^*} 2^{-F(x)} \leq 1/2$, as this would imply that 
\[ \sum_{x \in \{0,1\}^*} 2^{-H(x)} \leq \frac{1}{32} + \frac{1}{8} \times \frac{1}{2} \leq \frac{1}{8}. \]

Let $Z_F$ be the set of all those queries that we (the $F$ player) make nonrandom during the construction by playing a move in one of the games.
We have that 
\begin{align*}
\sum_{x \in \{0,1\}^*} 2^{-F(x)} &= \sum_{x \in Z_F} 2^{-(|x|- 4)} + \sum_{x \not \in Z_F} 2^{-(2|x|+3)} \\
&= \sum_{x \in \{0,1\}^*} 2^{-(2|x|+3)} + \sum_{x \in Z_F} (2^{-(|x| - 4)} - 2^{-(2|x|+3)}) \\
&\leq  \frac{1}{8} + \sum_{x \in Z_F} (2^{-(|x| - 4)} - 2^{(2|x|+3)}).
\end{align*}

Thus it now suffices to show that $tot_F = \sum_{x \in Z_F} (2^{-(|x| - 4)} - 2^{(2|x|+3)}) \leq 1/4$.
Notice that $tot_F$ is exactly the total number of points that the $F$ player accrues in all games throughout the lifetime of the construction.
First let us consider those games on which the $K$ player cheats.
We know that in all these games, the $F$ player accrues fewer points than the $K$ player, and in particular accrues fewer points during these games than $tot_K$, the total number of points the $K$ player accrues in all games throughout the lifetime of the construction.
Let $Z_K$ be the set of all those queries that the $K$ player makes nonrandom during the construction by playing a move in one of the games.
We have that 
\begin{align*}
tot_K &= \sum_{z \in Z_K} 2^{-(|z| - 6)} - 2^{-(2|z|+3)} \leq \sum_{z \in Z_K} 2^{-(K_{U_0}^\ts(z) + 5)} \leq \sum_{z \in \{0,1\}^*} 2^{-(K_{U_0}^\ts(z) + 5)} \leq \frac{1}{32},
\end{align*}
where the first inequality uses that for all $z \in Z_K$, $K_{U_0}^\ts(z) \leq |z|-6$, and the last inequality again comes from Proposition \ref{prefix_entr}.

Now consider games on which $K$ does not cheat -- for each $R_e$ there will be exactly one of these.
On each of these games the $F$ player can accrue at most $\epsilon = 2^{-e-3}$ points.
Thus the total number of points the $F$ player accrues on all games that $K$ does not cheat on is at most $\sum_{e=1}^\infty 2^{-e-3} = 1/8$.

Therefore $tot_F \leq 1/32 + 1/8 \leq 1/4$.
\end{proof}

\section{Removing the Advice}\label{sec:no-advice}

With the plain Kolmogorov complexity function $C$, it is fairly clear what is meant by a ``time-efficient'' universal Turing machine.  Namely, $U$ is a time-efficient universal Turing machine if, for every Turing machine $M$, there is a constant $c$ so that, for every $x$, if there is a description $d$ for which $M(d) = x$ in $t$ steps, then there is a description $d'$ of length $\leq |d|+c$ for which $U(d')=x$ in at most $ct\log t$ steps.  However, with prefix-free Kolmogorov complexity, the situation is more complicated.  The easiest way to define universal Turing machines for the prefix-free Kolmogorov complexity function $K$ is in terms of {\em self-delimiting Turing machines.} These are machines that have one-way access to their input tape; $x$ is a valid input for such a machine if the machine halts while scanning the last symbol of $x$.  For such machines, the notion of time-efficiency carries over essentially unchanged.  However, there are several other ways of characterizing $K$ (such as in terms of partial-recursive functions whose domains form a prefix code, or in terms of prefix-free entropy functions).  The running times of the machines that give short descriptions of $x$ using some of these other conventions can be substantially less than the running times of the corresponding self-delimiting Turing machines.  This issue has been explored in detail by Juedes and Lutz \cite{juedes.lutz}, in connection with the $\Polytime$ versus $\NP$ problem.  Given that there is some uncertainty about how best to define the notion of time-efficient universal Turing machine for $\Ktime$-complexity, one possible response is simply to allow much more leeway in the time-efficiency requirement.

If we do this, we are able to get rid of the small amount of non-uniformity in our $\PSPACE$ upper bound.

\begin{defi}
  A prefix machine $U$ is an $f$-efficient universal prefix machine if there exist constants $c_M$ for each prefix machine $M$, such that
  \begin{enumerate}
  \item $\forall x$, $K_U(x) \leq K_M(x) + c_M$; and
  \item $\forall x$, $K^t_U(x) \leq K^{t'}_M(x) + c_M$ for all $t(n) > f(t'(n))$.
  \end{enumerate}
\end{defi}

In Definition \ref{timeeffdef} we defined a time-efficient universal prefix machine to be any $\Poly(n)$-efficient universal prefix machine.

\begin{defi}
  Define $\DTTRtimeb$ to be the class of languages $L$ such that for all computable $f$ there exists $t_0$ such that for all $f$-efficient universal prefix machines $U$ and $t \geq t_0$, $L \leq_{tt}^p R_{K^t_U}$.
\end{defi}

\begin{thm}\label{no_advice}
  $\BPP \subseteq \DTTRtimeb \subseteq \PSPACE \cap \Ppoly$.
\end{thm}

Note that $\DTTRtimeb \subseteq \DTTRtime$, so from Theorem \ref{ppolythm} we get $\DTTRtimeb \subseteq \Ppoly$.  Also, the proofs in \cite{bfkl} can be adapted to show that $\BPP \subseteq \DTTRtimeb$. So all we need to show is the $\PSPACE$ inclusion.

\begin{proof}[Proof of Theorem \ref{no_advice}]

  The proof is similar to the proof of Theorem \ref{main}, with some minor technical modifications.  Let $L$ be an arbitrary language from $\DTTRtimeb - \PSPACE$.  Because $\DTTRtimeb \subseteq \DTTRtime$, as in the proof of Theorem \ref{main} we have that $L$ is decidable in time $t_L < 2^{n^k} t^\prime(n^k)$ for some fixed time bound $t^\prime$ and constant $k$.

  Define $f$ to be a fast enough growing function that $f(n) > 2^{(t_L(n^d))^d}$ for any constant $d$, for all large $n$.  
  By the definition of $\DTTRtimeb$, for this $f$ there exists a $t_0$ such that for all $t \geq t_0$, $L \leq_{tt}^p R_{K^t_U}$. 
  Let $\ts(n)$ be a time bound such that for all $n$, $\ts(n) > f(n)$ and $\ts(n) > t_0(n)$.  To get our contradiction, we will show that there exists an $f$-efficient universal prefix machine $U$ and constant $c > 1$ such that $L \not \leq_{tt}^p R_{K^v_U}$, where $v(|x|) = 2^{ (t_L(\ts(|x|)))^c} > t_0(|x|)$.

  We will make use of the following revised proposition:

\begin{prop}[Revised Proposition \ref{old6}]\label{rev6}
  Let $U$ and $M$ be an $n^c$-efficient universal prefix Turing machine and a prefix Turing machine respectively.  Let $t,t^\prime$ be time bounds and $f,g$ be two time-constructible increasing functions, such that $g(n^{c}) < f(n)$.  Let $t^{\prime\prime}(|x|) = g(t(|x|)) = h(t^\prime(|x|))$.  Then there is an $f$-efficient universal prefix machine $U^{\prime}$ such that
  \[K_{U^{\prime}}^{t^{\prime\prime}}(x) = \min(K_U^t(x), K_{M}^{t^\prime}(x)) + 1.\]
\end{prop}

\begin{proof}
  Almost identical to before: On input $0y$, $U^{\prime}$ runs $U$ on input $y$.  If $U$ would output string $x$ on $y$ after $s$ steps, then $U^{\prime}$ outputs string $x$ after $g(s)$ steps.  Similarly, on input $1y$, $U^\prime$ runs $M$ on input $y$.  If $M$ would output string $x$ on $y$ after $s$ steps, then $U^{\prime}$ outputs string $x$ after $h(s)$ steps.

  Note that because $U$ is an $n^c$-efficient universal prefix machine, $U^\prime$ will be an $f$-efficient universal prefix machine. 
\end{proof}

We will construct functions $F$ and $H$ such that

\begin{enumerate}
\item $F$ is a total function such that for all $x$, $F(x) \leq 2|x|+3$, and $F(x)$ is computable in time $2^{(t_L(\ts(|x|)))^d}$ by a machine $M$ for some constant $d$.
\item $H(x) = \min(K_{U_0}^{\ts} + 5, F(x) + 3)$.
\item $\sum_{x \in \{0,1\}^*} 2^{-H(x)} \leq 1/8$
\item $L \not \leq ^p_{tt} R_H$
\end{enumerate}

\begin{clm}[Revised Claim \ref{old1}]\label{rev1}
  Given the above properties $H = K_{U}^v$ for some $f$-efficient universal prefix machine $U$ (which by Property 4 ensures that the theorem holds)
\end{clm}

\begin{proof}
  By Property 3 we have that $\sum_{x \in \{0,1\}^*} 2^{-F(x)+3} \leq 1/8$.  Therefore it holds that 
$$\sum_{x \in \{0,1\}^*} 2^{F(x)} \leq 1.$$  
Using this along with Property 1, we then have by Proposition \ref{coding} that $K^u_{M^\prime} = F+2$ for some prefix machine $M^\prime$ and constant $d^\prime$, where $u(x) = 2^{(t_L(\ts(|x|)))^{d^\prime}}$.  By Proposition \ref{old7} we have that $K^\ts_{U^\prime} = K_{U_0}^{\ts} + 4$ for some $n^{c^\prime}$-efficient universal prefix machine $U^\prime$.  Therefore, by Proposition \ref{rev6}, $H(x) = \min(K^{\ts}_{U_0}(x) + 5, F(x) + 3) = \min(K^{\ts}_{U^\prime}(x), K^u_{M'}(x)) + 1$ is $K^v_U$ for some $f$-efficient universal prefix machine $U$ and constant $c > 1$, where $v(|x|) = 2^{(t_L(\ts(|x|)))^c}$. (In this last step we are using the fact that $f(n) > 2^{(t_L(n^k))^k}$ for any constant $k$ to ensure that $U$ is an $f$-efficient universal prefix machine by Proposition $\ref{rev6}$).
\end{proof}

The construction is virtually the same as in Theorem \ref{main}.

There is one change from Theorem \ref{main} in how the games are played.  During the preprocessing step of a game $\G_{e,x}$, all queries $z$ such that $\ts(|z|) \leq |x|$ are fixed according to $\min(K^\ts_{U_0}(z)+5, F^*(z) + 3)$.

If we get stuck during our lexicographical search to find a suitable $x^\prime$ to play the game $\G_{e,x^\prime}$, then this implies that the language $L$ is in $\PSPACE$, since by Proposition \ref{simpprop}, for some constant $k$ fixing all queries $z$ such that $\ts(|z|) \leq |x|$ according to $\min(K^\ts_{U_0}(z)+5, F^*(z) + 3)$ can be done in $|x|^k 2^{|z|} \ts(|z|) \le |x|^k \ts(|z|)^2 \le |x|^{k+2}$ time (and then it is a $\PSPACE$ computation to determine which side the $F$ player has a winning strategy for).

It remains to prove the following claim.

\begin{clm}
  $F(z)$ is computable in time $2^{(t_L(\ts(|z|)))^d}$ for some constant $d$.
\end{clm}

\proof
  Suppose during the construction we as the $F$ player make a move that sets a query $z$ to be nonrandom during a game $\G_{e,x}$.  Note that $\ts(|z|) > |x|$, otherwise $z$ would have been fixed during the preprocessing stage of the game.

  As in the proof of Claim \ref{time}, we can bound the total amount of time steps elapsed before $z$ is made nonrandom in the construction to be at most
\[
T = 2^{|x|^d} 2^{|z|} t_L(|x|) \ts(|z|) + s < 2^{(t_L(\ts(|z|)))^d}\eqno{\qEd}
\]
And this concludes the proof of Theorem \ref{no_advice}.
\end{proof}

\section{Conclusion}

We have made some progress towards settling our research question in the case of time-bounded Kolmogorov complexity, but we have also discovered that this situation is substantially different from the ordinary Kolmogorov complexity. Solving this latter case will likely prove to be much harder.

We would like to prove an exact characterization, such as $\BPP = \DTTR$ (or the time-bounded analogue thereof), but there seems to be no naive way of doing this. It has been shown in \cite{bfkl} that the initial segment $R_{K}^{\le \log n}$, a string of length $n$, requires circuits of size $n/c$, for some $c > 1$ and all large $n$; it is this fact that is used to simulate $\BPP$.
However, much stronger circuit lower bounds for the initial segment do not seem to hold (cf. Theorems 4--9 of \cite{bfkl}), suggesting that $\RK$ has some structure.  This structure can actually be detected --- the reduction $M$ of Theorem \ref{thm:distinguish} can be adapted to distinguish $\RK$ from a random oracle w.h.p. --- but we still don't know of any way of using $\RK$ non-adaptively, other than as a pseudo-random string. A new idea will be needed in order to either prove or disprove the $\BPP = \DTTR$ conjecture.


\section*{Acknowledgments}

The first and third authors acknowledge NSF Grants CCF-0832787 and CCF-1064785.
The second author acknowledges NWO grant Networks.
The fourth author acknowledges FCT grant SFRH/BD/43169/2008.

\bibliographystyle{alpha}
\bibliography{refs}

\end{document}